\providecommand{\U}[1]{\protect\rule{.1in}{.1in}}
\newtheorem{theorem}{Theorem} [section]
\newtheorem{conjecture}[theorem]{Conjecture}
\newenvironment{proof}[1][Proof]{\noindent\textbf{#1.} }{\ \rule{0.5em}{0.5em}}
\begin{document}

\author{Vadim E. Levit and David Tankus\\Department of Computer Science and Mathematics\\Ariel University Center of Samaria, ISRAEL\\\{levitv, davidta\}@ariel.ac.il}
\title{On Relating Edges in Graphs without Cycles of Length $4$}
\date{}
\maketitle

\begin{abstract}
An edge $xy$ is \textit{relating} in the graph $G$ if there is an independent
set $S$, containing neither $x$ nor $y$, such that $S\cup\{x\}$ and
$S\cup\{y\}$ are both maximal independent sets in $G$. It is an \textbf{NP}%
-complete problem to decide whether an edge is relating \cite{bnz:related}. We
show that the problem remains \textbf{NP}-complete even for graphs without
cycles of length $4$ and $5$. On the other hand, for graphs without cycles of
length $4$ and $6$, the problem can be solved in polynomial time.

\end{abstract}

\section{Introduction}

Throughout this paper $G=(V,E)$ is a simple (i.e., a finite, undirected,
loopless and without multiple edges) graph with vertex set $V=V(G)$ and edge
set $E=E(G).$

Let $S\subseteq V$ be a set of vertices, and let $i\in\mathbb{N}$. Then
\[
N_{i}(S)=\{w\in V|\ min_{s\in S}\ d(w,s)=i\},
\]
where $d(x,y)$ is the minimal number of edges required to construct a path
between $x$ and $y$. If $i\neq j$ then $N_{i}(S)\cap N_{j}(S)=\phi$. If
$S=\{v\}$ for some $v\in V$, then $N_{i}(\{v\})$ is abbreviated to $N_{i}(v)$.

A set of vertices $S\subseteq V$ is \textit{independent }if for every $x,y\in
S$, $x$ and $y$ are not adjacent. It is clear that an empty set is
independent. The \textit{independence number} of $G$, denoted by $\alpha(G)$,
is the cardinality of the maximum size independent set in the graph.

A graph is \textit{well-covered} if every maximal independent set has the same
cardinality, $\alpha(G)$.

Let $T\subseteq V$. Then $S$ \textit{dominates} $T$ if $S\cup N_{1}%
(S)\supseteq T$. If $S$ and $T$ are both empty, then $N_{1}(S)=\phi$, and
therefore $S$ dominates $T$. If $S$ is a maximal independent set of $G$, then
it dominates the whole graph.

Two adjacent vertices, $x$ and $y$, in $G$ are said to be \textit{related} if
there is an independent set $S$, containing neither $x$ nor $y$, such that
$S\cup\{x\}$ and $S\cup\{y\}$ are both maximal independent sets in the graph.
If $x$ and $y$ are related, then $xy$ is a \textit{relating edge}. To decide
whether an edge in an input graph is relating is an \textbf{NP}-complete
problem \cite{bnz:related}.

\begin{theorem}
\label{relatedNPC} \cite{bnz:related} The following problem is \textbf{NP}-complete:

Input: A graph $G=(V,E)$, and an edge $xy\in E$.

Question: Is $xy$ a relating edge in $G$?
\end{theorem}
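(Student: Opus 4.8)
The plan is to establish both membership in \textbf{NP} and \textbf{NP}-hardness. Membership is immediate: given $G$ and the edge $xy$, the independent set $S$ itself is a polynomial-size certificate, and one checks in polynomial time that $S$ is independent, that $x,y\notin S$ with neither adjacent to $S$, and that both $S\cup\{x\}$ and $S\cup\{y\}$ are maximal, i.e. each dominates $V$. So the real work is the hardness reduction, and I would reduce from \textsc{3-SAT}.

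Before building any gadget I would first record exactly what it means for $xy$ to be relating. Writing $A=N_{1}(x)\setminus(\{y\}\cup N_{1}(y))$, $B=N_{1}(y)\setminus(\{x\}\cup N_{1}(x))$, $C=N_{1}(x)\cap N_{1}(y)$, and $D=V\setminus(\{x,y\}\cup N_{1}(x)\cup N_{1}(y))$, any admissible $S$ must avoid $x,y$ and all their neighbours, hence $S\subseteq D$. Maximality of $S\cup\{x\}$ then forces $S$ to dominate $B\cup D$ (the vertices of $A\cup C\cup\{y\}$ being dominated by $x$), while maximality of $S\cup\{y\}$ forces $S$ to dominate $A\cup D$. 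Thus $xy$ is relating precisely when there is an independent set $S\subseteq D$ that dominates $A\cup B\cup D$. This reformulation as an \emph{independent set dominating a prescribed region} is what I would target with the SAT encoding.

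For a formula $\phi$ with variables $u_{1},\dots,u_{n}$ and clauses $C_{1},\dots,C_{m}$ I would place into $D$ two adjacent literal vertices $t_{i},f_{i}$ for each $u_{i}$, the edge $t_{i}f_{i}$ forbidding the choice of both. For each variable I add a vertex $p_{i}$ adjacent only to $x$ and to $t_{i},f_{i}$ (so $p_{i}\in A$), and for each clause a vertex $q_{j}$ adjacent only to $y$ and to the literal vertices occurring in $C_{j}$ (so $q_{j}\in B$); here $C=\phi$. By the reformulation, an admissible $S$ lies in $D$ and must dominate every $p_{i}$ and every $q_{j}$: since $x\notin S$, domination of $p_{i}$ forces at least one of $t_{i},f_{i}$ into $S$, which together with the edge $t_{i}f_{i}$ yields exactly one, i.e. a consistent truth assignment; domination of $q_{j}$ forces some literal of $C_{j}$ into $S$, i.e. the clause is satisfied. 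The literal vertices of $D$ are mutually dominated through the chosen partner of each pair. Hence $xy$ is relating iff $\phi$ is satisfiable, and the construction is clearly polynomial and yields a simple graph.

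The step I expect to be delicate is verifying that the maximality conditions capture satisfiability \emph{exactly}, with no spurious solutions. I must confirm that no admissible $S$ can stray outside $D$, that the per-variable requirement really forces a single literal rather than some other dominating configuration, and that in both $S\cup\{x\}$ and $S\cup\{y\}$ every outside vertex is genuinely dominated, so that the two sets are maximal and not merely independent. Carrying out this domination bookkeeping for the vertices of $A$, $B$ and $D$ in both directions of the equivalence is the crux; the surrounding SAT accounting is routine.
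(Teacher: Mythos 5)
Your proposal is correct: the NP membership argument is standard, your reformulation (writing $A$, $B$, $C$, $D$ and showing $xy$ is relating iff some independent $S\subseteq D$ dominates $A\cup B\cup D$) is sound, and the 3-SAT gadget verifies both directions of the equivalence exactly as you sketch. Note, however, that the paper does not prove this statement at all — it is quoted from \cite{bnz:related} — and the paper's own in-text proof is of the \emph{stronger} Theorem \ref{related4NPC}, where the constructed graph must lie in $\mathcal{G}(\widehat{C}_{4},\widehat{C}_{5})$. Your construction is essentially the mirror image of that gadget's skeleton (your $p_{i}$ adjacent to $x$ and to the pair $t_{i},f_{i}$ plays the role of the paper's triangle $(x_{i},t_{i},f_{i})$ with $x_{i}$ adjacent to $y$; your clause vertices $q_{j}$ hang off $y$ where the paper's $c_{j}$ hang off $x$), but you attach clause vertices directly to the literal vertices, whereas the paper inserts an intermediate layer $I_{T}\cup I_{F}$ of occurrence vertices $t_{i,j},f_{i,j}$ with crossed adjacencies ($t_{i}f_{i,j}$, $f_{i}t_{i,j}$) precisely to stretch all cycles: your graph contains short cycles (e.g.\ $x$--$y$--$q_{j}$--$t_{i}$--$p_{i}$--$x$ is a $C_{5}$, and two clauses sharing two literals create a $C_{4}$ through their $q$-vertices), so it proves the general theorem you were asked to prove but would not suffice for Theorem \ref{related4NPC}. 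What your route buys is a cleaner argument: your explicit reformulation lemma isolates the combinatorial core (an independent set in $D$ dominating a prescribed region), which the paper only uses implicitly here, though a close relative of it reappears in the proof of Theorem \ref{new} via the sets $M_{1}(v),M_{2}(v)$. One cosmetic caution: you use $\phi$ both for the formula and (as the paper does) for the empty set in ``$C=\phi$''; pick distinct symbols.
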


In \cite{bnz:related}, Brown, Nowakowski and Zverovich investigate
well-covered graphs with no cycles of length $4$. They denote the set of such
graphs by $\mathcal{WC}$$(\widehat{C}_{4})$, and prove the following.

\begin{theorem}
\label{remove} \cite{bnz:related} Let $G\in\mathcal{WC}(\widehat{C}_{4})$. If
$xy$ is an edge in $G$, but $x$ and $y$ are not related, then $G-xy$ is
well-covered and $\alpha(G)=\alpha(G-xy)$.
\end{theorem}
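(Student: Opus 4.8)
The plan is to prove the stronger statement that \emph{every} maximal independent set $M$ of $G-xy$ has cardinality exactly $\alpha(G)$; this yields both conclusions at once, since $\alpha(G-xy)\geq\alpha(G)$ always holds and a maximum independent set of $G-xy$ is in particular maximal, so all maximal independent sets then share the value $\alpha(G)$. The starting observation is that the independent sets of $G-xy$ are precisely the independent sets of $G$ together with those sets containing both $x$ and $y$. Accordingly I would split on whether $M$ contains both endpoints.

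If $M$ does not contain both $x$ and $y$, then $M$ is already independent in $G$, and I claim it is in fact maximal in $G$: any $v\notin M$ is joined in $G-xy$ to some $u\in M$ by maximality of $M$ in $G-xy$, and since every edge of $G-xy$ is an edge of $G$, the same $u$ blocks $v$ in $G$. Hence $|M|=\alpha(G)$ by well-coveredness, and this case is immediate.

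The substantial case is $x,y\in M$. Write $M'=M\setminus\{x,y\}$, so that $M'\cup\{x\}$ and $M'\cup\{y\}$ are independent in $G$; I would extend them to maximal independent sets $M'\cup\{x\}\cup A$ and $M'\cup\{y\}\cup B$ of $G$, where $A$ and $B$ are disjoint from the respective sets. Well-coveredness forces $|M'|+1+|A|=\alpha(G)=|M'|+1+|B|$, so $|A|=|B|=:k$. Because $M$ is maximal in $G-xy$, every $a\in A$ must be adjacent to $y$ in $G$ (otherwise $a$ could be added to $M$ in $G-xy$), so $A\subseteq N_1(y)$, and symmetrically $B\subseteq N_1(x)$ with $B\cap N_1(y)=\emptyset$. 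Here the absence of $4$-cycles enters, and I expect this to be the decisive step: for $a\in A$ and $b\in B$, an edge $ab$ would complete the cycle $x-y-a-b-x$, whose remaining non-edges $xa$ and $yb$ hold by independence, contradicting $G\in\mathcal{WC}(\widehat{C}_{4})$. Thus $A$ and $B$ are completely non-adjacent, and they are disjoint since $A\subseteq N_1(y)$ while $B\cap N_1(y)=\emptyset$. Consequently $M'\cup A\cup B$ is an independent set of $G$ of size $|M'|+2k$, whence $|M'|+2k\leq\alpha(G)=|M'|+1+k$, giving $k\leq 1$.

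Finally I would rule out $k=0$: if $A=B=\emptyset$, then $M'\cup\{x\}$ and $M'\cup\{y\}$ are themselves both maximal independent in $G$, so $x$ and $y$ are related with witness $S=M'$, contradicting the hypothesis. Hence $k=1$ and $|M|=|M'|+2=|M'|+1+k=\alpha(G)$, finishing the second case. The only genuinely delicate point is the $C_4$-based argument that produces the large independent set $M'\cup A\cup B$; everything else is bookkeeping about how maximality transfers between $G$ and $G-xy$ and a single application of well-coveredness.
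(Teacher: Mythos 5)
Your proof is correct, but there is a wrinkle in the comparison you should be aware of: this paper contains no proof of the statement at all. Theorem \ref{remove} is quoted from \cite{bnz:related} as a known result of Brown, Nowakowski and Zverovich, so the only argument to judge is yours, on its own merits — and it holds up. The framing (show every maximal independent set $M$ of $G-xy$ has size exactly $\alpha(G)$, which yields both well-coveredness of $G-xy$ and $\alpha(G)=\alpha(G-xy)$ since $\alpha(G-xy)\geq\alpha(G)$ trivially) is sound, and the easy case $\{x,y\}\not\subseteq M$ is handled correctly. In the main case, each step checks out: $M'\cup\{x\}$ and $M'\cup\{y\}$ are independent in $G$ because the only edge of $G$ inside $M$ could be $xy$; maximality of $M$ in $G-xy$ forces $A\subseteq N_{1}(y)$ (an $a\in A$ has no $G$-neighbor in $M'\cup\{x\}$, so its $M$-neighbor in $G-xy$ must be $y$, and $ay$ survives edge deletion since $a\neq x$) and symmetrically $B\subseteq N_{1}(x)$; the vertices $x,y,a,b$ of your putative $4$-cycle are pairwise distinct because $A\cap B=\emptyset$ follows from $A\subseteq N_{1}(y)$ and $B\cap N_{1}(y)=\emptyset$; and your observation that $xa,yb\notin E$ makes the contradiction valid whether one reads the hypothesis as forbidding $C_{4}$ subgraphs or only induced $C_{4}$'s. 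The count $|M'|+2k\leq\alpha(G)=|M'|+1+k$ and the exclusion of $k=0$ via the non-relatedness hypothesis (with witness $S=M'$) complete the argument; note that $|A|=|B|$, supplied by well-coveredness, is exactly what makes $k=0$ kill both extensions at once. One phrasing is loose but harmless: the independent sets of $G-xy$ are the independent sets of $G$ together with those sets containing both $x$ and $y$ \emph{that are independent in} $G-xy$, and your case split uses the correct version. As a bonus, your argument gives structural information beyond the statement: every maximal independent set of $G-xy$ containing both endpoints has $k=1$, i.e., each of the two extensions in $G$ gains exactly one vertex, adjacent to the opposite endpoint.
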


In this paper we continue the investigation of the structure of graphs with no
cycles of length $4$. We denote the set of graphs without cycles of sizes $k$
and $l$ by $\mathcal{G}(\widehat{C}_{k},\widehat{C}_{l})$. We prove that
Theorem \ref{relatedNPC} holds even for the case, where the input graph does
not contain cycles of length $4$ and $5$, i.e., $G\in\mathcal{G}(\widehat
{C}_{4},\widehat{C}_{5})$. On the other hand, if the input graph does not
contain cycles of length $4$ and $6$, i.e., $G\in\mathcal{G}(\widehat{C}%
_{4},\widehat{C}_{6})$, then the problem of identifying relating edges turns
out to be polynomial.

The fact that identifying relating edges is \textbf{NP}-complete for the input
restricted to $\mathcal{G}(\widehat{C}_{4},\widehat{C}_{5})$ is important,
because the analogous problem concerning well-covered graphs is known to be
polynomial \cite{fhn:wc45}.

\begin{theorem}
\label{wcc4c5npc} \cite{fhn:wc45} The following problem can be solved in
polynomial time:

Input: A graph $G\in\mathcal{G}(\widehat{C}_{4},\widehat{C}_{5})$.

Question: Is $G$ well-covered?
\end{theorem}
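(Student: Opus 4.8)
The plan is to reduce the recognition problem to a purely local structural characterization of the well-covered graphs in $\mathcal{G}(\widehat{C}_{4},\widehat{C}_{5})$, and then to observe that such a characterization can be verified vertex-by-vertex in polynomial time. First I would record the elementary consequences of forbidding $C_{4}$: any two vertices have at most one common neighbor, no two triangles share an edge, and $G$ contains no $K_{4}$; in particular every clique has at most three vertices. These facts sharply limit how the closed neighborhood $N_{1}(v)$ and the distance layers $N_{i}(v)$ can look, and they are the engine that makes a finite case analysis possible.

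Next I would build the characterization by analyzing a vertex $v$ of minimum degree together with the graph induced on $N_{1}(v)\cup N_{2}(v)$. Because $G$ has neither $C_{4}$ nor $C_{5}$, the adjacencies between the first and second layers are severely constrained, so the local picture around $v$ falls into a small number of types (isolated vertices, pendant edges, and pendant triangles, together with a handful of sporadic connected graphs such as $K_{1}$ and $C_{7}$). Using Theorem \ref{remove} as a guide for how deleting a non-relating edge preserves both well-coveredness and $\alpha(G)$, I would argue that a well-covered graph in this class must be an edge-disjoint assembly of these local pieces, so that the global condition reduces to ``every vertex is covered by a pendant edge or lies in a controlled triangle configuration,'' plus membership in an explicit finite exception list.

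Granting the characterization, the algorithm is immediate: enumerate the triangles and pendant edges (of which there are only polynomially many, since triangles are edge-disjoint), check the bounded local condition at each vertex, and test the finitely many sporadic cases directly; each step costs polynomial time, which establishes the theorem. The hard part is the sufficiency direction of the structural result, namely proving that the local conditions force \emph{global} well-coveredness. The presence of triangles is what makes this delicate: unlike the girth-$5$ situation, where the well-covered graphs are essentially a perfect matching of pendant edges, here a triangle may be extended to a maximal independent set in several inequivalent ways, and one must verify that every such extension reaches the same cardinality $\alpha(G)$. Controlling the interaction of these triangle choices across the whole graph --- while simultaneously using the exclusion of both $C_{4}$ and $C_{5}$ to rule out the dangerous configurations --- is where the real work lies.
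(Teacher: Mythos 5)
This statement is cited from \cite{fhn:wc45}; the paper you are being compared against offers no proof of it at all, so the only question is whether your proposal stands on its own. It does not, and you essentially concede this yourself: the entire content of the theorem is the structural characterization whose sufficiency direction you defer as ``where the real work lies.'' What you have written is an accurate reconstruction of the \emph{shape} of the Finbow--Hartnell--Nowakowski argument --- minimum-degree vertex, analysis of $N_{1}(v)\cup N_{2}(v)$ under the $C_{4}$- and $C_{5}$-exclusions, partition of $V(G)$ into pendant edges and constrained triangles, finite exception list containing $K_{1}$ and $C_{7}$ --- but the theorem is precisely the long case analysis establishing that characterization, and no part of that analysis appears in your sketch. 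A plan that names the correct target is not a proof of it; both the necessity direction (every well-covered graph in the class decomposes this way, up to exceptions) and the sufficiency direction remain entirely open in your write-up.

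Two further points deserve care if you pursue this. First, your algorithmic step is glossed too quickly: even after enumerating all (edge-disjoint) triangles and all pendant edges, deciding whether the vertex set admits a \emph{partition} into such pieces is not automatically polynomial --- partition into triangles is \textbf{NP}-hard in general graphs --- so you must use the specific ``special triangle'' condition of the characterization to make the selection forced or greedy; this is part of the work, not a free observation. Second, be wary of leaning on Theorem \ref{remove}: it presupposes that $G$ is already well-covered, so it can only serve the necessity direction, and any algorithm that needed to \emph{detect} relating edges in $\mathcal{G}(\widehat{C}_{4},\widehat{C}_{5})$ would collide with Theorem \ref{related4NPC} of this very paper, which shows that detection is \textbf{NP}-complete on exactly this class. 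The polynomial-time recognition of well-coveredness must therefore route around relating edges entirely, which the cited characterization does and your sketch does not yet guarantee.
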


\section{Main Results}

Let $X=\{x_{1},...,x_{n}\}$ be a set of 0-1 variables. We define the set of
\textit{literals} $L_{X}$ over $X$ by $L_{X} = \{x_{i}, \overline{x_{i}} : i =
1,...,n\}$, where $\overline{x} = 1 - x$ is the \textit{negation} of $x$. A
\textit{truth assignment} to $X$ is a mapping $t:X \longrightarrow\{0,1\}$
that assigns a value $t(x_{i}) \in\{0,1\}$ to each variable $x_{i} \in X$. We
extend $t$ to $L_{X}$ by putting $t(\overline{x_{i}}) = \overline{t(x_{i})}$.
A literal $l \in L_{X}$ is true under $t$ if $t(l) = 1$. A \textit{clause}
over $X$ is a conjunction of some literals of $L_{X}$. Let $C=\{c_{1}%
,...,c_{m}\}$ be a set of clauses over $X$. A truth assignment $t$ to $X$
\textit{satisfies} a clause $c_{j} \in C$ if $c_{j}$ involves at least one
true literal under $t$.

SAT is a well-known \textbf{NP}-complete problem \cite{gj:NPC}. It is defined
as follows.\newline\textit{Input}: A set of variables $X=\{x_{1},...,x_{n}\}$,
and a set of clauses $C=\{c_{1},...,c_{m}\}$ over $X$.\newline%
\textit{Question}: Is there a truth assignment to $X$ which satisfies all
clauses of $C$?

\begin{theorem}
\label{related4NPC} The following problem is \textbf{NP}-complete:

Input: A graph $G=(V,E)\in\mathcal{G}(\widehat{C}_{4},\widehat{C}_{5})$, and
an edge $xy\in E$.

Question: Is $xy$ a relating edge in $G$?
\end{theorem}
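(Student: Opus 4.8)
The plan is to prove NP-completeness by a polynomial-time reduction from SAT, which the excerpt has just defined as the canonical hard problem. Membership in NP is immediate: given a candidate independent set $S$, one checks in polynomial time that $S$ contains neither $x$ nor $y$ and that both $S\cup\{x\}$ and $S\cup\{y\}$ are maximal independent sets, so $S$ certifies that $xy$ is relating. The substance is the hardness direction, and the real work is building, from an arbitrary SAT instance $(X,C)$, a graph $G$ together with a distinguished edge $xy$ such that $xy$ is relating if and only if $C$ is satisfiable, while simultaneously guaranteeing $G\in\mathcal{G}(\widehat{C}_{4},\widehat{C}_{5})$.

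First I would design the gadgetry. The natural template is a variable gadget and a clause gadget. For each variable $x_i$ I would introduce vertices representing the two literals $x_i$ and $\overline{x_i}$, linked so that any independent set extending toward maximality must commit to exactly one of them; this encodes a truth assignment. For each clause $c_j$ I would introduce a vertex (or small gadget) adjacent to the literal-vertices appearing in $c_j$, so that domination of the clause vertex corresponds to the clause being satisfied. The distinguished edge $xy$ would be attached to a global control gadget whose role is to force the two maximal independent sets $S\cup\{x\}$ and $S\cup\{y\}$ to encode a single consistent satisfying assignment: the independent set $S$ simultaneously selects true literals, dominates every clause vertex, and remains extendable to a maximal set by either endpoint. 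I would then verify the forward direction (a satisfying assignment yields such an $S$) and the reverse direction (such an $S$ yields a satisfying assignment), checking in particular that maximality of $S\cup\{x\}$ and $S\cup\{y\}$ forces all clauses to be covered.

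The hard part will be enforcing the girth-type restriction, namely the absence of $4$-cycles and $5$-cycles, while preserving the logical constraints. A direct encoding of "literal $u$ appears in clauses $c_j$ and $c_k$" together with "variable $x_i$ has two literals" tends to create short cycles: two literals shared between two clauses immediately produce a $C_4$, and the interaction of variable and clause gadgets easily produces $C_5$. The standard remedy is to subdivide edges, inserting degree-two paths of carefully chosen parity so that every would-be short cycle is lengthened past $5$, and then to re-argue that subdivision does not disturb the maximal-independent-set bookkeeping. Choosing subdivision lengths that simultaneously kill all $C_4$ and $C_5$ while keeping the reduction correct, and proving that no short cycle survives the construction, is the delicate step that I expect to dominate the proof.

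Concretely, after specifying the gadgets with their subdivisions, I would organize the argument into three verifications: (i) $G$ is constructible in polynomial time and lies in $\mathcal{G}(\widehat{C}_{4},\widehat{C}_{5})$, which I would establish by a case analysis over all cycles through at most five vertices, tracing which gadget edges they could use and showing each possibility is blocked by a subdivision; (ii) if $t$ satisfies $C$, then the set $S$ built by selecting the true-literal vertices (plus the appropriate subdivision vertices) is independent, avoids $x$ and $y$, and makes both $S\cup\{x\}$ and $S\cup\{y\}$ maximal; and (iii) conversely, any independent $S$ witnessing that $xy$ is relating must pick exactly one literal per variable and dominate every clause vertex, yielding a satisfying assignment. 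Together (ii) and (iii) give the equivalence, and with (i) and NP membership the theorem follows.
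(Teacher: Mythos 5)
Your outline matches the paper's route in broad strokes (NP membership via the certificate $S$, then a reduction from SAT with variable gadgets that force a choice of literal and clause vertices whose domination encodes satisfaction), but as written it has a genuine gap: the entire mathematical content --- the concrete gadgets, the verification of the equivalence, and the girth check --- is deferred, and the one concrete technique you commit to, parity-tuned edge subdivision, is the wrong tool for this problem. Subdividing an edge does not innocently ``lengthen cycles'': each subdivision vertex is a new vertex that must itself be dominated by every maximal independent set, so subdivision changes which sets are maximal and can create independent sets that dominate the clause vertices without corresponding to any truth assignment. The ``re-arguing that subdivision does not disturb the bookkeeping'' that you postpone is precisely the step most likely to fail, and you give no reason to believe a consistent choice of subdivision lengths exists. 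You also never specify the control mechanism: nothing in your plan explains why maximality of \emph{both} $S\cup\{x\}$ and $S\cup\{y\}$ should force $S$ itself to dominate every clause vertex, which is the crux of the reverse direction.

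The paper resolves both issues at once with a cross-wired one-step construction rather than subdivision. Each variable $x_i$ yields a triangle $(x_i,t_i,f_i)$ with $x_i$ adjacent to $y$; each clause $c_j$ yields a single vertex adjacent to $x$; and each occurrence of a literal in $c_j$ gets its own private connector vertex, adjacent to $c_j$ and to the \emph{opposite} literal vertex (a $t_{i,j}$ adjacent to $f_i$, an $f_{i,j}$ adjacent to $t_i$). The asymmetric attachment does the control work: since $S\cup\{x\}$ is independent, no $c_j$ lies in $S$; since $S\cup\{y\}$ is independent, no $x_i$ lies in $S$, and domination of $x_i$ forces exactly one of $t_i,f_i$ into $S$; then domination of the connector $t_{i,j}$ (whose only neighbours are $f_i$ and $c_j$) forces $t_{i,j}\in S$ exactly when $t_i\in S$, so the clause vertices are dominated by $S$ --- as maximality of $S\cup\{y\}$ demands, $y$ being non-adjacent to them --- if and only if the induced assignment satisfies $C$. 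Because each connector is private to one pair $(i,j)$, no two vertices share two common neighbours (no $C_4$), and the shortest cycles other than the triangles are $C_6$'s such as $x,c_j,t_{i,j},f_i,x_i,y$ (assuming, as one may after trivial preprocessing, that no clause contains a literal together with its negation, which would otherwise create a $C_5$ through $c_j,t_{i,j},f_i,t_i,f_{i,j}$). This concrete design, which your plan still has to supply, is what makes both directions of the equivalence and the membership in $\mathcal{G}(\widehat{C}_{4},\widehat{C}_{5})$ checkable.
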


\begin{proof}
Clearly, the problem is in \textbf{NP}. We use a polynomial time reduction
from SAT. Let $(X=\{x_{1},...,x_{n}\},C=\{c_{1},...,c_{m}\})$ be an instance
of SAT. We construct a graph $G=G_{X,C}$ as follows (see Figure \ref{SAT45}).

The vertex set of $G$ contains:

\begin{itemize}
\item Two vertices, $x$ and $y$.

\item A set $T = \{ x_{i}, t_{i}, f_{i} : 1 \leq i \leq n \}$.

\item A set $C = \{ c_{j} : 1 \leq j \leq m \}$.

\item A set $I_{T} = \{t_{i,j} : 1 \leq i \leq n, 1 \leq j \leq m, x_{i}
\ appears \ in \ c_{j}\}$.

\item A set $I_{F} = \{f_{i,j} : 1 \leq i \leq n, 1 \leq j \leq m,
\overline{x_{i}} \ appears \ in \ c_{j} \}$.
\end{itemize}

The edge set of $G$ contains:

\begin{itemize}
\item The edge $xy$.

\item All edges $yx_{i}$, for $1 \leq i \leq n$.

\item All triangles $( x_{i}, t_{i}, f_{i} )$, for $1 \leq i \leq n$.

\item An edge $t_{i}f_{i,j}$, if $x_{i}$ appears in $c_{j}$, for $1 \leq i
\leq n$ and $1 \leq j \leq m$.

\item An edge $f_{i}t_{i,j}$, if $\overline{x_{i}}$ appears in $c_{j}$, for $1
\leq i \leq n$ and $1 \leq j \leq m$.

\item An edge $t_{i,j}c_{j}$, if $x_{i}$ appears in $c_{j}$, for $1 \leq i
\leq n$ and $1 \leq j \leq m$.

\item An edge $f_{i,j}c_{j}$, if $\overline{x_{i}}$ appears in $c_{j}$, for $1
\leq i \leq n$ and $1 \leq j \leq m$.

\item All edges $xc_{j}$, for $1\leq j\leq m$.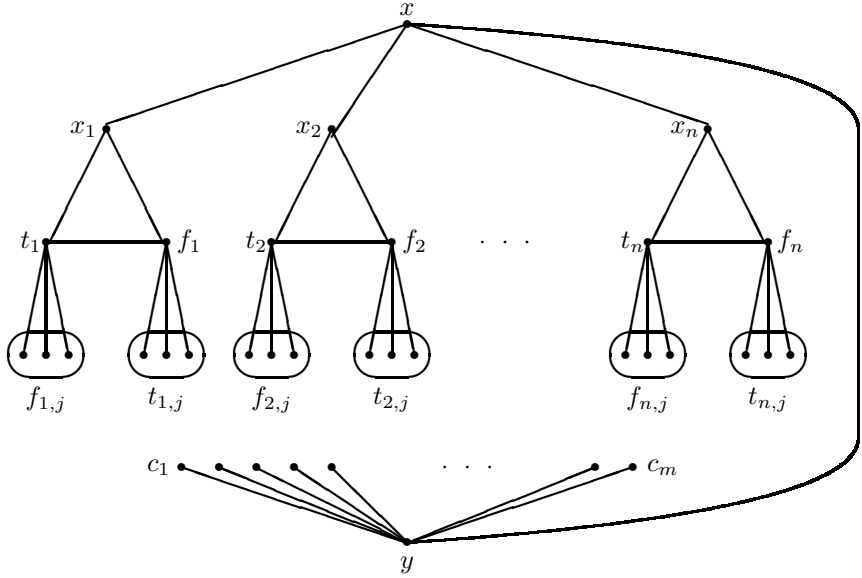
\begin{figure}[h]
\setlength{\unitlength}{1.0cm} \begin{picture}(20,8)\thicklines
\put(7,7.4){\circle*{0.1}}
\put(3,6){\circle*{0.1}}
\put(6,6){\circle*{0.1}}
\put(11,6){\circle*{0.1}}
\put(2.2,4.5){\circle*{0.1}}
\put(3.8,4.5){\circle*{0.1}}
\put(5.2,4.5){\circle*{0.1}}
\put(6.8,4.5){\circle*{0.1}}
\put(10.2,4.5){\circle*{0.1}}
\put(11.8,4.5){\circle*{0.1}}
\multiput(1.9,3)(0.3,0){3}{\circle*{0.1}}
\multiput(3.5,3)(0.3,0){3}{\circle*{0.1}}
\multiput(4.9,3)(0.3,0){3}{\circle*{0.1}}
\multiput(6.5,3)(0.3,0){3}{\circle*{0.1}}
\multiput(9.9,3)(0.3,0){3}{\circle*{0.1}}
\multiput(11.5,3)(0.3,0){3}{\circle*{0.1}}
\multiput(4,1.5)(0.5,0){5}{\circle*{0.1}}
\multiput(9.5,1.5)(0.5,0){2}{\circle*{0.1}}
\put(7,0.5){\circle*{0.1}}
\put(7,7.6){\makebox(0,0){$x$}}
\put(2.7,6){\makebox(0,0){$x_{1}$}}
\put(5.7,6){\makebox(0,0){$x_{2}$}}
\put(10.7,6){\makebox(0,0){$x_{n}$}}
\put(2,4.5){\makebox(0,0){$t_{1}$}}
\put(5,4.5){\makebox(0,0){$t_{2}$}}
\put(10,4.5){\makebox(0,0){$t_{n}$}}
\put(4.1,4.5){\makebox(0,0){$f_{1}$}}
\put(7.1,4.5){\makebox(0,0){$f_{2}$}}
\put(12.1,4.5){\makebox(0,0){$f_{n}$}}
\put(7,0.2){\makebox(0,0){$y$}}
\put(7,7.4){\line(-3,-1){4}}
\put(7,7.4){\line(-2,-3){1}}
\put(7,7.4){\line(3,-1){4}}
\put(3,6){\line(-1,-2){0.75}}
\put(3,6){\line(1,-2){0.75}}
\put(2.2,4.5){\line(2,0){1.6}}
\put(6,6){\line(-1,-2){0.75}}
\put(6,6){\line(1,-2){0.75}}
\put(5.2,4.5){\line(2,0){1.6}}
\put(11,6){\line(-1,-2){0.75}}
\put(11,6){\line(1,-2){0.75}}
\put(10.2,4.5){\line(2,0){1.6}}
\put(2.2,4.5){\line(-1,-5){0.3}}
\put(2.2,4.5){\line(0,-1){1.5}}
\put(2.2,4.5){\line(1,-5){0.3}}
\put(5.2,4.5){\line(-1,-5){0.3}}
\put(5.2,4.5){\line(0,-1){1.5}}
\put(5.2,4.5){\line(1,-5){0.3}}
\put(10.2,4.5){\line(-1,-5){0.3}}
\put(10.2,4.5){\line(0,-1){1.5}}
\put(10.2,4.5){\line(1,-5){0.3}}
\put(3.8,4.5){\line(-1,-5){0.3}}
\put(3.8,4.5){\line(0,-1){1.5}}
\put(3.8,4.5){\line(1,-5){0.3}}
\put(6.8,4.5){\line(-1,-5){0.3}}
\put(6.8,4.5){\line(0,-1){1.5}}
\put(6.8,4.5){\line(1,-5){0.3}}
\put(11.8,4.5){\line(-1,-5){0.3}}
\put(11.8,4.5){\line(0,-1){1.5}}
\put(11.8,4.5){\line(1,-5){0.3}}
\put(7,0.5){\line(-1,1){1}}
\put(7,0.5){\line(-3,2){1.5}}
\put(7,0.5){\line(-5,2){2.5}}
\put(7,0.5){\line(-2,1){2}}
\put(7,0.5){\line(-3,1){3}}
\put(7,0.5){\line(5,2){2.5}}
\put(7,0.5){\line(3,1){3}}
\put(3.7,1.5){\makebox(0,0){$c_{1}$}}
\put(10.4,1.5){\makebox(0,0){$c_{m}$}}
\qbezier(7,7.4)(13,7)(13,6)
\qbezier(13,6)(13,4)(13,1.9)
\qbezier(7,0.5)(13,0.9)(13,1.9)
\put(2.2, 3){\oval(1, 0.6)}
\put(3.8, 3){\oval(1, 0.6)}
\put(5.2, 3){\oval(1, 0.6)}
\put(6.8, 3){\oval(1, 0.6)}
\put(10.2, 3){\oval(1, 0.6)}
\put(11.8, 3){\oval(1, 0.6)}
\put(2.2,2.4){\makebox(0,0){$f_{1,j}$}}
\put(3.8,2.4){\makebox(0,0){$t_{1,j}$}}
\put(5.2,2.4){\makebox(0,0){$f_{2,j}$}}
\put(6.8,2.4){\makebox(0,0){$t_{2,j}$}}
\put(10.2,2.4){\makebox(0,0){$f_{n,j}$}}
\put(11.8,2.4){\makebox(0,0){$t_{n,j}$}}
\multiput(8,4.5)(0.3,0){3}{\circle*{0.05}}
\multiput(7.5,1.5)(0.3,0){3}{\circle*{0.05}}
\end{picture}
\caption{The structure of the graph $G_{X,C}$.}%
\label{SAT45}%
\end{figure}
\end{itemize}

The graph $G$ does not contain cycles of length 4 and 5. We show that $xy$ is
a relating edge in $G$ if and only if $(X,C)$ has a satisfying truth assignment.

Let $\Phi$ be a satisfying truth assignment for $(X,C)$. Define $S = \{ t_{i},
t_{i,j}: \Phi( x_{i} ) = 1 \} \cup\{ f_{i}, f_{i,j}: \Phi( x_{i} ) = 0 \}$.
Clearly, $S$ is independent. The fact that $\Phi$ is a satisfying truth
assignment implies that $S$ dominates $C$. Hence, $S \cup\{x\}$ and $S
\cup\{y\}$ are both maximal independent sets in $G$, and $xy$ is a relating edge.

Conversely, assume $xy$ is a relating edge. Let $S$ be an independent set,
such that $S \cup\{x\}$ and $S \cup\{y\}$ are both maximal independent sets in
$G$. Clearly, $S$ does not contain vertices of $C \cup\{ x_{1},...,x_{n} \}$.
Hence, for each $1 \leq i \leq n$ exactly one of $t_{i}$ and $f_{i}$ belongs
to $S$. If $t_{i} \in S$ then $t_{i,j} \in S$ for each possible $j$. If $f_{i}
\in S$ then $f_{i,j} \in S$ for each possible $j$. Define a truth assignment
$\Phi$: If $t_{i} \in S$ then $\Phi(x_{i})=1$, else $\Phi(x_{i})=0$, for every
$1 \leq i \leq n$. The fact that $C$ is dominated by $S$ implies that every
clause of $C$ involves a true literal. Therefore, $\Phi$ is a satisfying truth
assignment for $(X,C)$.
\end{proof}

\begin{theorem}
\label{new} The following problem can be solved in polynomial time:

Input: A graph $G=(V,E)\in\mathcal{G}(\widehat{C}_{4},\widehat{C}_{6})$, and
an edge $xy\in E$.

Question: Is $xy$ a relating edge in $G$?
\end{theorem}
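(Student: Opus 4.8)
The plan is to strip the problem to a purely combinatorial core and then exploit the forbidden cycles to make that core tractable. Membership in \textbf{NP} is clear, so I focus on the algorithm. Write $R = V \setminus (N_{1}(x) \cup N_{1}(y) \cup \{x,y\})$ for the vertices at distance at least $2$ from both $x$ and $y$, split the remaining neighbours into the private sets $P_{x} = N_{1}(x) \setminus (N_{1}(y) \cup \{y\})$ and $P_{y} = N_{1}(y) \setminus (N_{1}(x) \cup \{x\})$, and set aside the common neighbours $N_{1}(x) \cap N_{1}(y)$. Any witnessing set $S$ must avoid $N_{1}(x) \cup N_{1}(y) \cup \{x,y\}$, so $S \subseteq R$; and checking maximality of $S \cup \{x\}$ and of $S \cup \{y\}$ shows that the common neighbours are automatically dominated (by $x$, resp.\ by $y$), while $P_{y}$ must be dominated by $S$ for $S \cup \{x\}$, $P_{x}$ must be dominated by $S$ for $S \cup \{y\}$, and every vertex of $R \setminus S$ must have a neighbour in $S$. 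Since enlarging an independent subset of $R$ never destroys domination, the last requirement is free: any independent $S_{0} \subseteq R$ dominating $P_{x} \cup P_{y}$ extends to an independent subset of $R$ that is maximal among such subsets, hence dominates $R$, and still dominates $P_{x} \cup P_{y}$. Thus $xy$ is relating \emph{if and only if} there is an independent set $S_{0} \subseteq R$ dominating $P_{x} \cup P_{y}$.

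The second step reads off the structure of the available dominators from the absence of $C_{4}$ and $C_{6}$. For $p \in P_{x}$ put $D_{p} = N_{1}(p) \cap R$, and similarly $D_{q}$ for $q \in P_{y}$. A vertex of $R$ adjacent to two neighbours of $x$ would close a $C_{4}$ through $x$, so the sets $\{D_{p} : p \in P_{x}\}$ are pairwise disjoint, and likewise the $\{D_{q}\}$. An edge between some $u \in D_{p}$ and $v \in D_{q}$ would yield the $6$-cycle $x,p,u,v,q,y$, so $C_{6}$-freeness forbids all edges between $\bigcup_{p} D_{p}$ and $\bigcup_{q} D_{q}$; consequently the domination of $P_{x}$ and of $P_{y}$ decouple, and $S_{0}$ exists if and only if a witness exists on each side separately. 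Finally, if $u \in D_{p}$ were adjacent to dominators lying in two distinct parts $D_{p'}, D_{p''}$ one again finds a $C_{6}$ through $x$, while two dominators of the same $p'$ adjacent to $u$ give a $C_{4}$; hence every dominator has at most one neighbour outside its own part, i.e.\ the inter-part adjacencies within $\bigcup_{p} D_{p}$ form a matching.

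What remains, on each side, is the selection problem: choose at least one vertex from each part $D_{p}$ so that the chosen vertices are independent, where the only inter-part edges form a matching (and an empty part makes $xy$ non-relating at once). This is exactly the problem of finding an independent transversal in a conflict graph of maximum degree one, and I expect it to be the main obstacle, since it is genuinely non-trivial: a singleton part forces its unique vertex, which can propagate and leave another part unsatisfiable, so mere non-emptiness of all parts does not suffice. The plan is to solve it by unit propagation --- repeatedly fix the vertex of any singleton part, delete its matched partner, and report failure if a part empties --- and then to observe that the residual instance, in which every part has at least two vertices and the conflicts form a matching, always admits an independent transversal (the maximum-degree-one, hence tight, case of the independent-transversal theorem, which also follows from a short augmenting-path argument). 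All of these steps run in polynomial time, which establishes the theorem.
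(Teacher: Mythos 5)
Your proposal is correct, and it follows the paper's reduction up to the decisive computational step, where it takes a genuinely different route. Your sets $P_{v}$ and $D_{p}=N_{1}(p)\cap R$ are the paper's $M_{1}(v)$ and (a cleaned-up version of) $M_{2}(v)=N_{1}(M_{1}(v))-\{v\}$, and your structural observations --- disjointness of the $D_{p}$ via $C_{4}$-freeness, no edges between $\bigcup_{p\in P_{x}}D_{p}$ and $\bigcup_{q\in P_{y}}D_{q}$ via $C_{6}$-freeness (which gives the decoupling; note the two unions may still share vertices, since $C_{5}$ is allowed --- the paper records this as its second and third bullet conclusions, and your union argument handles it correctly because shared vertices create no edges), and the matching structure of the remaining adjacencies --- are exactly the paper's statements that every vertex of $M_{2}(v)$ has a unique neighbour in $M_{1}(v)$ and that every connectivity component of $M_{2}(v)$ has at most two vertices. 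Where you diverge is the per-side subproblem: the paper encodes it as a unit-capacity flow network (source to $M_{1}(v)$, graph edges from $M_{1}(v)$ to $M_{2}(v)$, component gadgets $a_{i}$ to the sink) and runs Ford--Fulkerson, so that a maximum flow of value $|M_{1}(v)|$ is equivalent to the existence of an independent dominating selection; you instead phrase it as an independent transversal problem whose conflict graph has maximum degree one, and solve it by unit propagation followed by the observation that once every part has at least two vertices and the conflicts form a matching, a transversal always exists --- the tight $\Delta=1$ case of Haxell's independent-transversal theorem (parts of size $2\Delta$ suffice). Both finishes are valid and polynomial: the flow formulation is entirely self-contained and yields the paper's explicit $O(|V|(|V|+|E|))$ bound, while your propagation argument isolates the combinatorial reason the problem is easy (the only obstructions are forced singleton chains), but as written it rests on an external theorem; spelling out the promised augmenting-path argument for the residual case would make it self-contained. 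Two small points where your write-up is actually tighter than the paper's: restricting dominators to $R$ excludes neighbours of $x$ and $y$ outright (the paper's $M_{2}(v)$ can contain neighbours of $v$ when triangles through $v$ are present, a corner case its component-size argument glosses over), and extending $S_{0}$ to a maximal independent subset of $R$ --- rather than of $G-\{x,y\}$, as the paper does --- guarantees that the extension avoids common neighbours of $x$ and $y$, which is needed for $S\cup\{x\}$ and $S\cup\{y\}$ to be independent.
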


\begin{proof}
For every $v\in\{x,y\}$, let $u=\{x,y\}-\{v\}$, and define: \newline%
$M_{1}(v)=N_{1}(v)\cap N_{2}(u)$, $M_{2}(v)=N_{1}(M_{1}(v))-\{v\}$.

The vertices $x$ and $y$ are related if and only if there exists an
independent set in $M_{2}(x)\cup M_{2}(y)$ which dominates $M_{1}(x)\cup
M_{1}(y)$.

The fact that the graph does not contain cycles of length $6$ implies the
following 3 conclusions:

\begin{itemize}
\item There are no edges which connect vertices of $M_{2}(x)$ with vertices of
$M_{2}(y)$.

\item The set $M_{2}(x) \cap M_{2}(y)$ is independent.

\item There \ are \ no \ edges \ between \ $M_{2}(x) \cap M_{2}(y)$ \ and
\ other \ vertices \ of $M_{2}(x) \cup M_{2}(y)$.
\end{itemize}

Hence, if $S_{x}\subseteq M_{2}(x)$ and $S_{y}\subseteq M_{2}(y)$ are
independent, then $S_{x}\cup S_{y}$ is independent, as well. Therefore, it is
enough to prove that one can decide in polynomial time whether there exists an
independent set in $M_{2}(v)$ which dominates $M_{1}(v)$, where $v\in\{x,y\}$.

Let $v$ be any vertex in $\{x,y\}$. Every vertex of $M_{2}(v)$ is adjacent to
exactly one vertex of $M_{1}(v)$, or otherwise the graph contains a $C_{4}$.
Every connectivity component of $M_{2}(v)$ contains at most $2$ vertices, or
otherwise the graph contains either a $C_{4}$ or a $C_{6}$. Let $A_{1}%
,...,A_{k}$ be the connectivity components of $M_{2}(v)$.

Define a flow network $F_{v}=\{G_{F}=(V_{F},E_{F}),s\in V_{F},t\in
V_{F},w:E_{F}\longrightarrow R\}$ as follows.

Let $V_{F}=M_{1}(v)\cup M_{2}(v)\cup\{a_{1},...,a_{k},s,t\}$, where
$a_{1},...,a_{k},s,t$ are new vertices, $s$ and $t$ are the source and sink of
the network, respectively.

The directed edges $E_{F}$ are:

\begin{itemize}
\item the directed edges from $s$ to each vertex of $M_{1}(v)$;

\item all directed edges $v_{1}v_{2}$ s.t. $v_{1}\in M_{1}(v)$, $v_{2}\in
M_{2}(v)$ and $v_{1}v_{2}\in E$;

\item the directed edges $va_{i}$, for each $1\leq i\leq k$ and for each $v\in
A_{i}$;

\item the directed edges $a_{i}t$, for each $1\leq i\leq k$.
\end{itemize}

Let $w\equiv1$. Invoke any polynomial time algorithm for finding a maximum
flow in the network, for example Ford and Fulkerson's algorithm. Let $S_{v}$
be the set of vertices in $M_{2}(v)$ in which there is a positive flow.
Clearly, $S_{v}$ is independent. The maximality of $S_{v}$ implies that
$|M_{1}(v)\cap N_{1}(S_{v})|\geq|M_{1}(v)\cap N_{1}(S_{v}^{\prime})|$, for any
independent set $S_{v}^{\prime}$ of $M_{2}(v)$.

Let us conclude the proof with the recognition algorithm for relating edges.

For each $v\in\{x,y\}$, build a flow network $F_{v}$ as described above, and
find a maximum flow. Let $S_{v}$ be the set of vertices in $M_{2}(v)$ in which
there is a positive flow. If $S_{v}$ does not dominate $M_{1}(v)$ the
algorithm terminates announcing that $x$ and $y$ are not related. Otherwise,
let $S$ be any maximal independent set of $G-\{x,y\}$ which contains
$S_{x}\cup S_{y}$. Each of $S\cup\{x\}$ and $S\cup\{y\}$ is a maximal
independent set of $G$, and $x$, $y$ are related.

This algorithm can be implemented in polynomial time: One iteration of Ford
and Fulkerson's algorithm includes:

\begin{itemize}
\item Updating the flow function. (In the first iteration the flow is equal to
$0$.)

\item Constructing the residual graph.

\item Finding an augmenting path, if exists. It is worth mentioning that the
residual capacity of every augmenting path equals $1$.
\end{itemize}

Each of the above can be implemented in $O\left(  \left\vert V\right\vert
+\left\vert E\right\vert \right)  $ time. In each iteration the number of
vertices in $M_{2}(v)$ with a positive flow increases by $1$. Therefore, the
number of iterations can not exceed $\left\vert V\right\vert $, and Ford and
Fulkerson's algorithm terminates in $O\left(  \left\vert V\right\vert \left(
\left\vert V\right\vert +\left\vert E\right\vert \right)  \right)  $ time. Our
algorithm invokes Ford and Fulkerson's algorithm twice, and terminates in
$O\left(  \left\vert V\right\vert \left(  \left\vert V\right\vert +\left\vert
E\right\vert \right)  \right)  $ time.
\end{proof}

\section{Conjectures}

Our main conjecture reads as follows.

\begin{conjecture}
For every integer $k\geq7$, the following recognition problem is \textbf{NP}-complete.

Input: A graph $G=(V,E)\in\mathcal{G}(\widehat{C}_{4},\widehat{C}_{k})$, and
an edge $xy\in E$.

Question: Is $xy$ a relating edge in $G$?
\end{conjecture}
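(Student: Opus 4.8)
The plan is to establish both halves of the claim. Membership in \textbf{NP} is immediate: an independent set $S$ witnessing that $xy$ is relating is a polynomial-size certificate, and one checks in polynomial time that $S$ is independent, that $S\cap(N_{1}(x)\cup N_{1}(y))=\phi$, and that each of $S\cup\{x\}$ and $S\cup\{y\}$ dominates $V$.

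For \textbf{NP}-hardness I would reduce from SAT. Given $(X,C)$ with $X=\{x_{1},\dots,x_{n}\}$ and $C=\{c_{1},\dots,c_{m}\}$, I first record the behaviour any relating witness must have in the graph I intend to build. I will make $x$ adjacent only to $y$ and to the clause vertices $c_{j}$, and make $y$ adjacent only to $x$ and to one selector vertex $x_{i}$ per variable, so that $N_{1}(x)$ and $N_{1}(y)$ are disjoint. Any independent $S$ for which $S\cup\{x\}$ and $S\cup\{y\}$ are both maximal must therefore avoid $N_{1}(x)\cup N_{1}(y)$, must dominate every selector vertex $x_{i}$ (to make $S\cup\{x\}$ maximal), and must dominate every clause vertex $c_{j}$ (to make $S\cup\{y\}$ maximal). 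These two domination demands are exactly the hooks on which I hang ``choose a truth value'' and ``satisfy every clause''.

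Concretely, for each variable I would use a truth-setting triangle $\{x_{i},t_{i},f_{i}\}$ with $x_{i}$ joined to $y$: dominating $x_{i}$ forces exactly one of $t_{i},f_{i}$ into $S$, read as the truth value of $x_{i}$. For each clause $c_{j}$, joined to $x$, I would attach one occurrence vertex per literal occurrence; dominating $c_{j}$ forces at least one such vertex into $S$. The decisive wiring joins each positive-occurrence vertex $p_{i,j}$ to $f_{i}$ and each negative-occurrence vertex $q_{i,j}$ to $t_{i}$, so that $p_{i,j}$ may enter $S$ only when $t_{i}\in S$ (i.e.\ $x_{i}$ true) and $q_{i,j}$ only when $f_{i}\in S$ (i.e.\ $x_{i}$ false). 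With this wiring a satisfying assignment yields a dominating independent $S$, and conversely any relating witness induces a consistent assignment satisfying every clause; I would prove both implications directly from the forced-membership analysis above. The construction is plainly polynomial in $n+m$.

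The step I expect to be the genuine obstacle is certifying that the resulting graph lies in $\mathcal{G}(\widehat{C}_{4},\widehat{C}_{5})$. The sole reason for inserting the extra occurrence layer, rather than joining $t_{i},f_{i}$ to clause vertices directly, is to push every cycle beyond length $5$, and this must be confirmed by an exhaustive check: the only triangles are the gadgets $\{x_{i},t_{i},f_{i}\}$; every cycle that uses the edge $xy$, or runs between two clauses, or between two variable gadgets, is forced through the occurrence layer and so has length at least $6$ (which is permitted); and the single threatening $5$-cycle $t_{i}-f_{i}-p_{i,j}-c_{j}-q_{i,j}-t_{i}$ exists only when a clause contains both $x_{i}$ and $\overline{x_{i}}$, which I rule out by the standard preprocessing that deletes tautological clauses. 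Marshalling these cases to certify the absence of all $4$- and $5$-cycles, while keeping the logical equivalence intact, is where the real care lies.
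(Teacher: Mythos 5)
There is a genuine gap, and it is fundamental: you have proved the wrong statement. The statement at hand is the paper's \emph{conjecture}, which asserts \textbf{NP}-completeness for every fixed $k\geq7$ with input restricted to $\mathcal{G}(\widehat{C}_{4},\widehat{C}_{k})$ --- graphs with no $4$-cycle \emph{and no $k$-cycle}. Your reduction (which, incidentally, reproduces almost verbatim the paper's proof of its Theorem \ref{related4NPC} for $\mathcal{G}(\widehat{C}_{4},\widehat{C}_{5})$, down to the triangle gadgets $\{x_{i},t_{i},f_{i}\}$, the occurrence layer, and the cross-wiring of positive occurrences to $f_{i}$ and negative ones to $t_{i}$) produces a graph that avoids only $C_{4}$ and $C_{5}$. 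It does \emph{not} avoid $C_{k}$ for any $k\geq6$: for instance, whenever $\overline{x_{i}}$ occurs in some clause $c_{j}$, the cycle $x$--$c_{j}$--$q_{i,j}$--$t_{i}$--$f_{i}$--$x_{i}$--$y$--$x$ has length exactly $7$, and cycles of length $6,8,9,\dots$ arise just as freely (e.g.\ $x$--$c_{j}$--$q_{i,j}$--$t_{i}$--$x_{i}$--$y$--$x$ of length $6$, or cycles through $y$ and two distinct variable gadgets). So for every $k\geq7$ your constructed graph generically lies \emph{outside} $\mathcal{G}(\widehat{C}_{4},\widehat{C}_{k})$, the reduction does not produce valid instances of the stated problem, and no hardness conclusion follows. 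Your ``genuine obstacle'' paragraph, which carefully certifies the absence of $4$- and $5$-cycles (including the correct observation about tautological clauses creating a $5$-cycle), is aimed at the hypothesis of Theorem \ref{related4NPC}, not at the conjecture.

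To actually attack the conjecture you would need, for each fixed $k\geq7$, a gadget design in which \emph{no} cycle has length exactly $k$ while $4$-cycles are also excluded --- for example by subdividing edges or padding paths so that all cycle lengths land in controlled residue classes or intervals avoiding $k$. This is delicate precisely because the set of realized cycle lengths grows with the instance (cycles threading $y$, several variable gadgets, and several clause vertices take many different lengths), and ruling out one specific length uniformly over all instance sizes is the open difficulty. That is why the paper states this as a conjecture with no proof: your proposal, correct as a re-derivation of Theorem \ref{related4NPC}, leaves the conjecture untouched.
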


\end{document}